
\documentclass[hidelinks]{article}
\usepackage[a4paper, total={6in, 8in}]{geometry}

\usepackage[utf8]{inputenc}
\usepackage[T1]{fontenc}

\usepackage{lineno,hyperref}
\modulolinenumbers[5]
\usepackage{stmaryrd}
\usepackage{xcolor}
\usepackage{bbold}
\usepackage{subfig}
\usepackage{enumitem}
\usepackage{tabularx}
\usepackage{booktabs}

\usepackage{amsmath,amsthm,amssymb}
\newtheorem{Theorem}{Theorem}
\newtheorem{Definition}{Definition}

\usepackage{cleveref}
\usepackage{makecell, multirow}
\usepackage{subcaption}
\usepackage{array}
\newcolumntype{C}{>{\centering\arraybackslash}X}
\newcolumntype{L}{>{\raggedright\arraybackslash}X}

\usepackage{amsfonts}
\usepackage{graphicx}
\usepackage{epstopdf}
\ifpdf
  \DeclareGraphicsExtensions{.eps,.pdf,.png,.jpg}
\else
  \DeclareGraphicsExtensions{.eps}
\fi



\usepackage[linesnumbered,ruled,vlined]{algorithm2e}
\makeatletter
\newcommand{\nosemic}{\renewcommand{\@endalgocfline}{\relax}}
\newcommand{\dosemic}{\renewcommand{\@endalgocfline}{\algocf@endline}}
\newcommand{\pushline}{\Indp}
\newcommand{\popline}{\Indm\dosemic}
\let\oldnl\nl
\newcommand{\nonl}{\renewcommand{\nl}{\let\nl\oldnl}}
\makeatother

\newcommand{\email}[1]{\href{mailto:#1}{#1}}

\captionsetup[sub]{position=bottom,
        labelfont={bf, small, stretch=1.17},
        labelsep=space,
        textfont={small, stretch=1.5},
        aboveskip=6pt, 
        belowskip=6pt,
        singlelinecheck=off,
        justification=justified}  

\def\R{{\mathbb{R}}}
\def\Z{{\mathbb{Z}}}
\def\F{{\mathcal{F}}}
\def\E{{\mathcal{E}}}

\graphicspath{{./Images}}


\title{Demons registration for 2D empirical wavelet transforms}

\author{Charles-G\'erard Lucas\thanks{Department of Mathematics \& Statistics, San Diego State University,
5500 Campanile Dr., San Diego, 92182, CA, USA.
  (\email{clucas2@sdsu.edu}, \email{jgilles@sdsu.edu}).
  }
\and J\'er\^ome Gilles\footnotemark[1]
}

\title{Demons registration for 2D empirical wavelet transforms}

\date{}

\begin{document}
\emergencystretch 3em

\maketitle

\begin{abstract}
The empirical wavelet transform is a fully adaptive time-scale representation that has been widely used in the last decade. Inspired by the empirical mode decomposition, it consists of filter banks based on harmonic mode supports. Recently, it has been generalized to build the filter banks from any generating function using mappings. In practice, the harmonic mode supports can have a low-constrained shape in 2D, leading to numerical difficulties to estimate mappings adapted to the construction of empirical wavelet filters. This work aims to propose an efficient numerical scheme to compute empirical wavelet coefficients using the demons registration algorithm. Results show that the proposed approach is robust, accurate, and continuous wavelet filters permitting reconstruction with a low signal-to-noise ratio. An application for texture segmentation of scanning tunneling microscope images is also presented.
\end{abstract}

{\small \textbf{Keywords:} Empirical wavelet, adaptive representation, demons algorithm, texture segmentation. }

\section{Introduction}
The empirical wavelet transform is a fully adaptive time-scale representation, introduced in~\cite{Gilles2013}, based on data-driven filter banks. Unlike the traditional wavelet transform, the~wavelet filters are not designed on the basis of prescribed scales independent of the data under study, but~on the basis of supports that contain the underlying harmonic modes. This approach is inspired by the empirical mode decomposition~\cite{huang1998empirical} but~is based on solid theoretical foundations that the latter lacks.
Due to its robust performance in decomposing images, it has led to many applications such as glaucoma detection~\cite{maheshwari2016automated,parashar2020automatic}, hyperspectral image classification~\cite{prabhakar2017two}, cancer histopathological image classification~\cite{deo2024ensemble}, medical image fusion~\cite{sundar2017multi,polinati2020multimodal} and texture segmentation~\cite{Huang2019}. Notably, it has been shown to outperform traditional wavelet transforms in extracting appropriate texture features from images~\cite{huang2018review,hurat2020empirical}.

The last decade has seen an intensive development of the two steps involved in the 2D empirical wavelet transform: $(i)$ the extraction of regions containing the harmonic modes and $(ii)$ the design of filters mainly supported by these regions.
The extraction of harmonic modes can be performed by scale-space representations~\cite{Gilles2014a}, and several methods of separation of the harmonic modes in possibly low-constrained shaped supports have been proposed, such as the Tensor~\cite{Gilles2013}, Ridgelet, Curvelet~\cite{hurat2020empirical}, Watershed~\cite{Gilles2013a} and Voronoi~\cite{gilles2022empirical} methods. However, the~design of filter banks has mainly long been limited to a specific generating function based on the Littlewood--Paley formulation. This formulation relies on a separable definition or the distance to the boundaries of the supports, which limits its extension to other generating functions. 

Recently, a general framework has been proposed to consider classic generating functions~\cite{lucas2024multidimensional}. 
Deformations of a homogeneous or separable wavelet kernel are carried out by mappings between the data's Fourier supports and the generating function's Fourier support. In~practice, this approach suffers from the difficulty of estimating the required mappings with constraints of invertibility, continuity, and differentiability.
Such an estimation can be performed by the well-known demons algorithm~\cite{thirion1998image}, as first proposed by~\cite{lucas2024multidimensional}, or its variants~\cite{cachier2003iconic,vercauteren2009diffeomorphic}. These fast algorithms have been widely used and compared for medical image registration~\cite{wang2018adaptive,tabassum2020registration}, where they have shown to provide robust and accurate estimates. However, an~in-depth investigation of their use for empirical wavelet transform remains a key challenge. This is the main objective addressed in this work.

This manuscript is organized as follows. In Section~\ref{sec:2Dewt}, we recall the framework of empirical wavelet transform based on mappings and present different demons algorithms for mapping estimation. In particular, as~a first contribution, we build specific band-pass wavelet filters from homogeneous or separable wavelet kernels, which is suitable for an empirical analysis of the wavelet transform.
Section~\ref{sec:exp} presents the main contribution of this manuscript: a comparative analysis of various demons algorithms for designing continuous wavelet filters. The~accuracy of the estimated filters and the effect of the estimation process on wavelet reconstruction is thoroughly analyzed. Additionally, to~demonstrate the practical utility of the proposed tools, we include an illustration of texture segmentation for scanning tunneling microscope images. Section~\ref{sec:exp} provides a detailed discussion of the~\mbox{findings}.  The~\textsc{MATLAB} toolbox for the empirical wavelet transform is freely available at \url{https://github.com/jegilles/Empirical-Wavelets}.

\section{Notations}
We recall that an invertible function $\gamma$ is called a homeomorphism if it is continuous of inverse continuous and a diffeomorphism if it is infinitely differentiable of inverse infinitely differentiable. 
We consider that the space of square integrable functions $\mathrm{L}^2(\R^2)$ is endowed with the usual inner product
$$\langle f,g \rangle =\int_{\R^2} f(x) \overline{g(x)} \mathrm{d} x.$$
The Fourier transform $\widehat{f}$ of a function $f \in \mathrm{L}^2(\R^2)$ and its inverse are given by, respectively,
\begin{align*}
& \widehat{f}(\xi) = \F (f)(\xi) = \int_{\R^2} f(x)e^{-2\pi i (\xi \cdot x)} \mathrm{d} x, \\
& f(x) = \F^{-1} (\widehat{f})(x) = \int_{\R^2} \widehat{f}(\xi)e^{2\pi i (\xi \cdot x)} \mathrm{d} \xi,
\end{align*}
where $\cdot$ stands for the dot product in $\R^2$. 

\section{Empirical wavelet transform}
\label{sec:2Dewt}

\subsection{Empirical wavelet systems}

The construction of empirical wavelets relies on the partitioning of the Fourier domain $\Omega$. Technically, we consider a family of disjoint open sets $\{\Omega_n\}_{n\in\Upsilon}$, with~$\Upsilon \subset \Z$, of~closures $\overline{\Omega}_n$ covering the Fourier domain, i.e.,~$\Omega=\bigcup_{n\in\Upsilon} \overline{\Omega_n}$. 
In this work, we focus on real-valued images, which implies that the Fourier domain is symmetric, leading us to consider a partition $\{\Omega_n\}_{n\in\Upsilon}$ that is symmetric, i.e.,~such that $\Omega_{-n} = \{-\xi \mid \xi \in \Omega_n \}$. To~obtain such a partition, the~modes of the Fourier spectrum are obtained by scale-space representation~\cite{Gilles2014a} , and the Fourier domain is partitioned by the Watershed~\cite{hurat2020empirical} or Voronoi~\cite{gilles2022empirical} methods. Figure~\ref{fig:images} shows an example of an image and the Watershed and Voronoi partitions of the logarithm of its Fourier spectrum using a scale-space step-size parameter set to $s_0=0.8$.

Symmetric empirical wavelet systems are built as filters mostly supported on the sets $\Omega_n \cup \Omega_{-n}$ for $n \in\Upsilon^+=\{n\in\Upsilon \mid n \geq 0\}$.
The authors of~\cite{lucas2024multidimensional}have  proposed building filters $\chi_n$ from a wavelet kernel $\psi$ using mappings $\gamma_n$ on $\R^2$. Let $\psi$ be such that its Fourier transform $\widehat{\psi}$ is localized in frequency and compactly supported (or rapidly decaying) by a connected open subset $\Lambda$.
Let $\{\gamma_n\}_{n\in\Upsilon}$ be mappings on $\R^2$ such that $\Lambda=\gamma_n(\Omega_n)  \textrm{ if } \Omega_n$ is bounded and $\Lambda \subsetneq \gamma_n(\Omega_n)$ otherwise. Empirical wavelet systems $\psi_n$ and their symmetric counterparts $\chi_n$ are then defined as~follows. 

\begin{figure}[!t]
\centerline{
  \includegraphics[width=.3\textwidth]{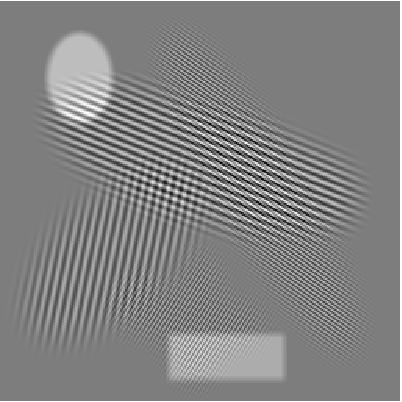}
  \includegraphics[width=.3\textwidth]{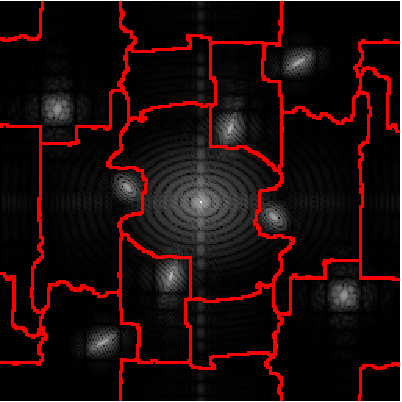}
  \includegraphics[width=.3\textwidth]{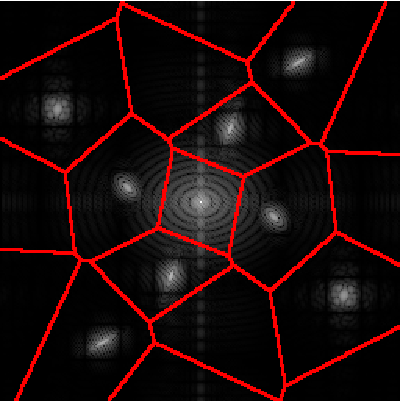}
   }
 \caption{{\bf Fourier partitioning.} Toy image of size $256 \times 256$ along with the (middle) Watershed and (right) Voronoi partitions (overlapping in red) of the logarithm of their Fourier spectra.}
  \label{fig:images}
\end{figure}

Symmetric empirical wavelet systems are built as filters mostly supported on the sets $\Omega_n \cup \Omega_{-n}$ for $n \in\Upsilon^+=\{n\in\Upsilon \mid n \geq 0\}$ the subset of positive elements of $\Upsilon$.
In \cite{lucas2024multidimensional}, it has been proposed to build filters $\chi_n$ from a wavelet kernel $\psi$ using mappings $\gamma_n$ on $\R^2$. Let $\psi$ such that its Fourier trasnform $\widehat{\psi}$ is localized in frequency and compactly (or rapidly decaying) supported by a connected open subset $\Lambda$.
Let $\gamma_n$ be mappings on $\R^2$ such that $\Lambda=\gamma_n(\Omega_n)  \textrm{ if } \Omega_n$ is bounded and $\Lambda \subsetneq \gamma_n(\Omega_n)$ otherwise. Empirical wavelet systems $\psi_n$ and their symmetric counterparts $\chi_n$ are then defined as follows. 

\begin{Definition}[Normalized empirical wavelet system]
\label{def:norm_system}
Assume that the mappings $\gamma_n$ are diffeomorphisms.
The symmetric empirical wavelet system $\{\chi_n\}_{n\in\Upsilon}$ corresponding to the partition $\{\Omega_n\}_{n\in\Upsilon}$ is defined by, for all $\xi \in\R^2$,
\begin{equation*}
\widehat{\chi}_0(\xi) = \widehat{\psi}_0(\xi) \quad  \textrm{ and } \quad \forall n\in\Upsilon\setminus\{0\}, \quad \widehat{\chi}_n(\xi) = \frac{1}{\sqrt{2}} \left( \widehat{\psi}_n(\xi) + \widehat{\psi}_{-n}(\xi) \right),
\end{equation*}
where, for every $n\in\Upsilon$,
\begin{equation*}
\widehat{\psi}_n(\xi)= \frac{1}{\sqrt{a_n(\xi)}} \; \widehat{\psi} \circ \gamma_n(\xi), \qquad \textrm{with } \quad a_n(\xi)=\frac{1}{\vert \mathrm{det}\; J_{\gamma_n}(\xi) \vert},
\end{equation*}
with $\gamma_{-n}:\xi \mapsto -\gamma_n(-\xi)$ and $J_{\gamma_n}$ being the Jacobian of the mapping $\gamma_n$.
\end{Definition}

The symmetry of the filter $\widehat{\chi}_n$ has to be understood with respect to the Fourier support $n$, i.e. $\widehat{\chi}_n=\widehat{\chi}_{-n}$.
The normalization coefficient $a_n(\xi)$ ensures that
\begin{equation}
\label{eq:substitution}
\int_{\Omega_n} \vert \widehat{\chi}_n (\xi) \vert^2 \mathrm{d} \xi = \int_{\Lambda} \vert \widehat{\psi}(\xi) \vert^2 \mathrm{d} \xi,
\end{equation}
in most cases (see \cite{lucas2024multidimensional} for more details).
If $\gamma_n$ is not a diffeomorphism, the substitution theorem ensuring the energy conservation \eqref{eq:substitution} is not valid despite the use of the normalization coefficient $a_n(\xi)$. 
However, in practice, estimating diffeomorphisms is a difficult and computationally expensive task.
Since an intervertible continuous function is a homeopmorphism if and only if it is an open map, i.e., a map for which the preimage of an open set is an open set, the existence of homeomorphisms is a milder assumption and their estimation is less difficult.
We thus propose an unnormalized definition for the case when $\gamma_n$ is only a homeomorphism by removing the Jacobian determinant from Definition~\ref{def:norm_system}, as follows.

\begin{Definition}[Unnormalized empirical wavelet system]
\label{def:unnorm_system}
Assume that the mapping $\gamma_n$ is a homeomorphism.
The unnormalized symmetric  empirical wavelet system $\{\chi_n\}_{n\in\Upsilon}$ corresponding to the partition $\{\Omega_n\}_{n\in\Upsilon}$ is defined by, for all $\xi \in\R^2$,
\begin{equation*}
\widehat{\chi}_0(\xi) = \widehat{\psi}_0(\xi) \quad  \textrm{ and } \quad \forall n\in\Upsilon\setminus\{0\}, \quad \widehat{\chi}_n(\xi) = \widehat{\psi}_n(\xi) + \widehat{\psi}_{-n}(\xi),
\end{equation*}
where $\widehat{\psi}_n= \widehat{\psi} \circ \gamma_n$ and $\gamma_{-n}:\xi \mapsto -\gamma_n(-\xi)$. 
\end{Definition}

\subsection{{Empirical Wavelet~Transform}}

The empirical wavelet transform consists of a filtering process from a normalized or unnormalized empirical wavelet system, as follows.

\begin{Definition}[Empirical wavelet transform]
The symmetric empirical wavelet transform is defined by, for all $n\in\Upsilon$, 
\begin{equation*}
\E_{\chi}^f(\cdot,n) = \F^{-1}\left(\widehat{f} . \overline{\widehat{\chi}_n}\right),
\end{equation*}
where $\F$ denotes the inverse Fourier transform. 
\end{Definition}

Figure~\ref{fig:ewt_filtering} shows an example of unnormalized wavelet filter's Fourier transforms $\widehat{\chi}_n$ and the resulting wavelet coefficients $\E_{\chi}^f(\cdot,n)$ for the toy image in Figure~\ref{fig:images} (left) and the Watershed partition of its Fourier spectrum given in Figure~\ref{fig:images} (middle).

\begin{figure}[!h]
\centerline{
  \includegraphics[width=\textwidth]{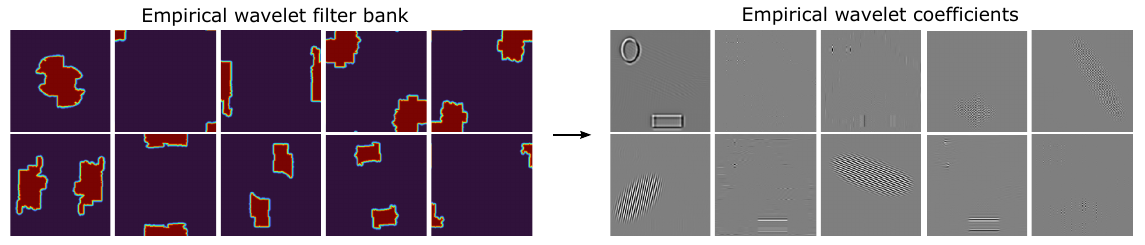}
   }
 \caption{{\bf Empirical wavelet filtering.} Example of unnormalized symmetric empirical wavelet filters $\chi_n$ in the Fourier domain (left) and resulting wavelet coefficients $\E_{\chi}^f(\cdot,n)$ (right), for $n=1,\ldots,10$, associated with the toy image using the Watershed partitioning.}
  \label{fig:ewt_filtering}
\end{figure}

The reconstruction of a real-valued image $f$ from its symmetric empirical wavelet transform $\E_{\chi}^f$ is guaranteed by Theorem 4.3 in \cite{lucas2024multidimensional}, recalled hereafter.

\begin{Theorem}[Reconstruction]
\label{th:reconstruction}
Let $\Upsilon^+ = \{n \in \Upsilon: n \geq 0 \}$. Let assume that, for $a.e.\, \xi \in \R^2$,
$0<\sum_{n\in\Upsilon^+}\left|\widehat{\psi}_n(\xi)\right|^2<\infty$ and $\sum_{n\in\Upsilon^+\setminus\{0\}}\left|\widehat{\psi}_n(\xi)\right| \left|\widehat{\psi}_{-n}(\xi)\right|<\infty.$
Then 
\begin{align}
\label{eq:symreconstruction}
f = \sum_{n\in\Upsilon^+} \E_{\chi}^f(\cdot,n)\star \F^{-1}\left(\frac{\widehat{\chi}_n}{\displaystyle\sum_{m\in\Upsilon^+}\left|\widehat{\chi}_m\right|^2} \right),
\end{align}
where $\star$ denotes the convolution of functions and $\F^{-1}$ denotes the inverse Fourier transform.
\end{Theorem}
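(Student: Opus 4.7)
The plan is to pass to the Fourier domain and exploit the convolution theorem. Set $S(\xi) = \sum_{m\in\Upsilon^+} |\widehat{\chi}_m(\xi)|^2$ and, once $S$ is shown to be a.e.\ finite and strictly positive, the whole identity collapses to a one-line computation: applying $\F$ to the $n$-th summand on the right-hand side of \eqref{eq:symreconstruction} gives
\[
\F\!\left(\E_{\chi}^f(\cdot,n)\star \F^{-1}\!\left(\widehat{\chi}_n/S\right)\right)(\xi) \;=\; \widehat{f}(\xi)\,\overline{\widehat{\chi}_n(\xi)}\cdot\frac{\widehat{\chi}_n(\xi)}{S(\xi)} \;=\; \widehat{f}(\xi)\,\frac{|\widehat{\chi}_n(\xi)|^2}{S(\xi)},
\]
so that summation in $n\in\Upsilon^+$ produces exactly $\widehat{f}(\xi)$ and inverse Fourier transforming concludes the proof.

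The real content of the statement is therefore the verification that $S$ is a.e.\ finite and positive under the two hypotheses. For finiteness, I would expand, for $n\in\Upsilon^+\setminus\{0\}$,
\[
|\widehat{\chi}_n|^2 \;=\; |\widehat{\psi}_n|^2 + |\widehat{\psi}_{-n}|^2 + 2\,\mathrm{Re}\bigl(\widehat{\psi}_n\,\overline{\widehat{\psi}_{-n}}\bigr),
\]
and then bound the diagonal sum using the first hypothesis and the partition's symmetry (which relates $|\widehat{\psi}_{-n}(\xi)|$ to $|\widehat{\psi}_n(-\xi)|$), and the cross term by $2|\widehat{\psi}_n||\widehat{\psi}_{-n}|$, which is summable by the second hypothesis. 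Positivity is the delicate point: although $\sum_{n\in\Upsilon^+}|\widehat{\psi}_n|^2>0$ a.e., the unnormalised $\widehat{\chi}_n=\widehat{\psi}_n+\widehat{\psi}_{-n}$ could in principle cancel, so one must appeal to the structure of the partition $\{\Omega_n\}$ and of the mappings $\gamma_n$ (essentially: for a.e.\ $\xi$ there is a unique $n$ such that $\xi\in\Omega_n$, so at most one term dominates, and the other $\widehat{\psi}_{-n}$ contribution lies in the symmetric cell, preventing full cancellation).

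Once $S^{-1}$ is a well-defined measurable function, the interchange of $\F$ with the infinite sum in the displayed computation above is justified by dominated convergence applied to the partial sums, bounded in modulus by $|\widehat{f}|$, together with Plancherel to return from the Fourier side. I expect the main obstacle to be the positivity of $S$ rather than the algebraic identity, since the algebra is forced by the convolution theorem; the summability conditions in the statement of \Cref{th:reconstruction} are chosen precisely to make the analytic manipulations legitimate and are ultimately what power the reduction to the trivial cancellation $S/S = 1$.
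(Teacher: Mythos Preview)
The paper does not actually prove \Cref{th:reconstruction}; it is recalled from Theorem~4.3 of \cite{lucas2024multidimensional} without argument, so there is nothing in the present paper to compare your proposal against. That said, your Fourier-side reduction via the convolution theorem is the standard route and is almost certainly what the cited reference does: once $S=\sum_{m}|\widehat{\chi}_m|^2$ is shown to be a.e.\ finite and strictly positive, the identity is exactly the one-line cancellation you wrote down, and your finiteness argument via the symmetry $|\widehat{\psi}_{-n}(\xi)|=|\widehat{\psi}_n(-\xi)|$ together with the second hypothesis for the cross term is sound.

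Your instinct that positivity of $S$ is the delicate point is correct, but the heuristic you offer (``for a.e.\ $\xi$ one term dominates, preventing full cancellation'') is not yet a proof, and in fact the two stated hypotheses by themselves do not exclude $\widehat{\psi}_n(\xi)=-\widehat{\psi}_{-n}(\xi)$ for all $n$ on a set of positive measure. What actually closes the gap in the setting of this paper is that $\widehat{\psi}\ge 0$ for the band-pass kernels under consideration: then $\widehat{\psi}_n\ge 0$ and $\widehat{\psi}_{-n}\ge 0$, so $\widehat{\chi}_n=0$ forces $\widehat{\psi}_n=\widehat{\psi}_{-n}=0$, and if this holds for every $n\in\Upsilon^+$ one contradicts the first hypothesis. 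If you intend a self-contained proof, make that sign condition (or an equivalent non-cancellation assumption on $\widehat{\psi}$) explicit rather than appealing informally to the geometry of the partition.
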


\subsection{Band-pass empirical wavelets}

For the study of the numerical behavior of the empirical wavelet transform, it is more convenient to consider band-pass filters. Therefore, we propose in this section defining homogeneous or separable 2D wavelet band-pass filters.
The proposed definition extends the band-pass wavelet filters introduced in~\cite{hurat2020empirical, Gilles2013}, which are based on the semi-Euclidean distance to support boundaries, thus corresponding to a discrete mapping. In 1D, the~continuous band-pass wavelet kernel can be defined as follows.
\begin{Definition}[1D band-pass wavelet]
The 1D band-pass wavelet, mostly supported by the segment $\Lambda=[-\frac{1}{2},\frac{1}{2}]$, is defined by, for every $\xi \in \R$,
\begin{equation*}
\widehat{\psi}_{\tau}^{\rm 1D}(\xi) =  
\begin{cases}
1 & \textrm{if } \vert \xi \vert < \frac{1}{2} - \tau, \\
\displaystyle \cos \frac{\pi}{2} \beta \left(\frac{\tau - \frac{1}{2} + \vert \xi \vert}{2 \tau} \right) & \textrm{if }  \frac{1}{2} - \tau \leq \vert \xi \vert \leq \frac{1}{2} + \tau, \\
0 & \textrm{otherwise},
\end{cases}
\end{equation*}
where $0 < \tau < \frac{1}{2}$ is a transition width and $\beta$ is a continuous function on $[0,1]$ such that $\beta(0)=0$, $\beta(1)=1$ and $\beta(x)+\beta(1-x)=1$ for every $x \in [0,1]$.
\end{Definition}
The function $\beta$ is usually chosen as $\beta(x) = x^4(35-84x+70x^2-20x^3)$. 
The 1D band-pass wavelet can be extended to 2D in a homogeneous or separable way, as proposed in the following definitions.

\begin{Definition}[Disk band-pass wavelet]
The disk band-pass wavelet, mostly supported by the disk $\Lambda = \mathrm{B}(0,\frac{1}{2})$, is defined by, for every $\xi \in \R^2$,
\begin{equation*}
\widehat{\psi}^{\mathrm{D}}_\tau(\xi) =  
\begin{cases}
1 & \textrm{if } \Vert \xi \Vert_2 < \frac{1}{2} - \tau, \\
\displaystyle \cos \frac{\pi}{2} \beta \left(\frac{\tau - \frac{1}{2} + \Vert \xi \Vert_2}{2 \tau} \right) 
& \textrm{if }  \frac{1}{2} - \tau \leq \Vert \xi \Vert_2 \leq \frac{1}{2} + \tau, \\
0 & \textrm{otherwise},
\end{cases}
\end{equation*}
where $0 < \tau < \frac{1}{2}$ is a transition width and $\beta$ is a continuous function on $[0,1]$ such that $\beta(0)=0$, $\beta(1)=1$ and $\beta(x)+\beta(1-x)=1$ for every $x \in [0,1]$.
\end{Definition}

\begin{figure}[!t]
\centerline{
  \includegraphics[width=.25\textwidth]{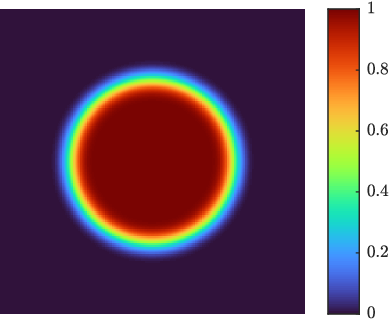}
  \hspace{.1\textwidth}
  \includegraphics[width=.25\textwidth]{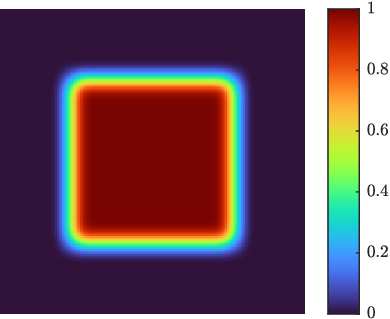}
   }
 \caption{{\bf Band-pass wavelet.} Disk (left) and square (right) band-pass wavelet with transition width $\tau=0.2$.}
  \label{fig:bandpass_wavelet}
\end{figure}

\begin{Definition}[Square band-pass wavelet]
The square band-pass wavelet, mostly supported by the square $\Lambda=[-\frac{1}{2},\frac{1}{2}]^2$, is defined by, for every $\xi=(\xi_1,\xi_2) \in \R^2$,
\begin{equation*}
\widehat{\psi}^{\mathrm{S}}_\tau(\xi) = \widehat{\psi}_{\tau}^{\rm 1D}(\xi_1) \; \widehat{\psi}_{\tau}^{\rm 1D}(\xi_2),
\end{equation*}
where $0 < \tau < \frac{1}{2}$ is a transition width.
\end{Definition}

Figure~\ref{fig:bandpass_wavelet} show examples of disk and square band-pass wavelet filters.

\begin{Theorem}
\label{th:disk_frames}
Let $\chi^{\mathrm{D},\tau}_n = \chi^{\mathrm{D}}_\tau \circ \gamma_n$ and $\chi^{\mathrm{S},\tau}_n = \chi^{\mathrm{S}}_\tau \circ \gamma_n$, the (normalized or unnormalized) symmetric wavelet systems resulting from the disk and square band-pass wavelets $\widehat{\psi}_{\tau}^D$ and $\widehat{\psi}_{\tau}^S$, respectively, for~every $n \in \Upsilon$. The~reconstruction Formula \eqref{eq:symreconstruction} holds for the normalized and unnormalized systems $\{\chi^{\mathrm{D},\tau}_n\}_{n \in \Upsilon}$ and $\{\chi^{\mathrm{S},\tau}_n\}_{n \in \Upsilon}$.
\end{Theorem}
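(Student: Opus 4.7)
The plan is to reduce the claim to a direct application of Theorem~\ref{th:reconstruction} by verifying its two pointwise summability hypotheses for all four systems under consideration (disk vs.\ square kernel, normalized vs.\ unnormalized). Writing $\widehat{\psi}_n=\widehat{\psi}\circ\gamma_n$ in the unnormalized case and $\widehat{\psi}_n=(1/\sqrt{a_n})\,\widehat{\psi}\circ\gamma_n$ in the normalized one, with $\widehat{\psi}\in\{\widehat{\psi}^{\mathrm{D}}_\tau,\widehat{\psi}^{\mathrm{S}}_\tau\}$, it suffices to show that for a.e.\ $\xi\in\R^2$ the sums $\sum_{n\in\Upsilon^+}|\widehat{\psi}_n(\xi)|^2$ and $\sum_{n\in\Upsilon^+\setminus\{0\}}|\widehat{\psi}_n(\xi)||\widehat{\psi}_{-n}(\xi)|$ are finite, and that the first is strictly positive.

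First I would exploit the structure of the band-pass kernels: both $\widehat{\psi}^{\mathrm{D}}_\tau$ and $\widehat{\psi}^{\mathrm{S}}_\tau$ take values in $[0,1]$ and are compactly supported, namely in $\overline{\mathrm{B}(0,\tfrac12+\tau)}$ and in $[-\tfrac12-\tau,\tfrac12+\tau]^2$ respectively. Since $\gamma_n$ maps $\Omega_n$ onto (or onto a set containing) $\Lambda$, the pullback $\widehat{\psi}\circ\gamma_n$ is supported in a mild $\tau$-enlargement of $\Omega_n$. Because the $\Omega_n$ are disjoint open sets whose closures cover $\R^2$, any fixed $\xi$ meets only finitely many such enlarged supports, so both sums above reduce to finite sums pointwise. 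Together with the bound $|\widehat{\psi}\circ\gamma_n|\leq 1$, this immediately gives the finite upper bounds in the unnormalized case; in the normalized case one additionally controls $1/a_n(\xi)=|\det J_{\gamma_n}(\xi)|$, which is finite and locally bounded since $\gamma_n$ is a diffeomorphism near every $\xi$ where its contribution is non-zero, and only finitely many $n$ contribute at each such point.

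For strict positivity, I would argue that a.e.\ $\xi$ lies in the interior of some $\Omega_n$, so $\gamma_n(\xi)$ is inside $\Lambda$. Both kernels equal $1$ on their inner portion ($\{\Vert\cdot\Vert_2<\tfrac12-\tau\}$ for the disk, or the corresponding inner square), and on the transition region one has $\cos\tfrac{\pi}{2}\beta(x)>0$ whenever $\beta(x)<1$. Hence $\widehat{\psi}_n(\xi)>0$ for this $n$, giving a strictly positive contribution to the first sum. The cross-term bound then follows from the same finite-overlap argument combined with the pointwise bound $\widehat{\psi}_n,\widehat{\psi}_{-n}\leq 1$.

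The main obstacle is the case of an unbounded $\Omega_n$, where $\Lambda\subsetneq\gamma_n(\Omega_n)$: the support of $\widehat{\psi}\circ\gamma_n$ is then only a proper subset of $\Omega_n$, so points of $\Omega_n$ whose image under $\gamma_n$ lies outside $\Lambda$ could make every $\widehat{\psi}_m$ vanish, threatening the lower bound. I would handle this either by restricting to the practical setting in which all $\Omega_n$ are bounded---as is the case for the Watershed and Voronoi partitions of discrete Fourier spectra used throughout the paper---or by exploiting the controlled growth of the Jacobian of $\gamma_n$ on the effective support so that the normalization remains well behaved a.e. Once both hypotheses are verified, Theorem~\ref{th:reconstruction} directly yields~\eqref{eq:symreconstruction} for all four wavelet systems, concluding the proof.
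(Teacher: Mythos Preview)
Your proposal is correct and follows essentially the same route as the paper: verify the two hypotheses of Theorem~\ref{th:reconstruction} by using the compact support of $\widehat{\psi}^{\mathrm{D}}_\tau$ and $\widehat{\psi}^{\mathrm{S}}_\tau$ to reduce each sum to finitely many nonzero terms at a fixed $\xi\in\Omega_m$, and obtain strict positivity from $\widehat{\psi}_\tau(\gamma_m(\xi))\neq 0$ together with $a_n(\xi)>0$. Your explicit discussion of the unbounded-$\Omega_n$ case is in fact more careful than the paper, which simply asserts $\widehat{\psi}_\tau(\gamma_m(\xi))\neq 0$ for $\xi\in\Omega_m$ without separating the bounded and unbounded situations.
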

\begin{proof}
Let  $\widehat{\psi}_{\tau}$ be either $\widehat{\psi}_{\tau}=\widehat{\psi}_{\tau}^D$ or $\widehat{\psi}_{\tau}=\widehat{\psi}_{\tau}^S$. We set $\Lambda_{\tau} = \mathrm{B}(0,\frac{1}{2}+\tau)$ if $\widehat{\psi}_{\tau}=\widehat{\psi}_{\tau}^D$ and 
$\Lambda_{\tau} =[-\frac{1}{2}-\tau,\frac{1}{2}+\tau]^2 $ if $\widehat{\psi}_{\tau}=\widehat{\psi}_{\tau}^S$. Let $\xi \in \Omega_m$. 
Since there exists a finite number of $\gamma_n$ such that $\gamma_n(\xi) \in \Lambda_\tau$, we have
\begin{equation}
\sum_{n \in \Upsilon^+} \left\vert \widehat{\psi}_n(\xi) \right\vert^2 = \sum_{n \in \Upsilon^+} \left\vert \frac{1}{a_n(\xi)} \right\vert \left\vert \widehat{\psi}_{\tau}(\gamma_n(\xi)) \right\vert^2 =  \sum_{n \in \Upsilon_m^+} \left\vert \frac{1}{a_n(\xi)} \right\vert \left\vert \widehat{\psi}_{\tau}(\gamma_n(\xi)) \right\vert^2 < +\infty,
\end{equation}
where $\Upsilon_m^+ = \{n \in \Upsilon^+:\gamma_n(\xi) \in \Lambda_\tau\}$ and $a_n(\xi)=1$ or $a_n(\xi)=1/\vert \mathrm{det}\; J_{\gamma_n}(\xi) \vert$ for the normalized or unnormalized empirical wavelet system, respectively. 
Moreover, since $\widehat{\psi}_{\tau}(\gamma_m(\xi))\neq 0$ and $a_n(\xi) > 0$ for every $n \in \Upsilon$, we have $0 < \sum_{n \in \Upsilon^+} \vert \widehat{\psi}_n(\xi) \vert^2$. 
Similarly, we can show that $\sum_{n \in \Upsilon^+} \vert \widehat{\psi}_n(\xi) \widehat{\psi}_{-n}(\xi) \vert < +\infty$. Hence, the reconstruction Formula \eqref{eq:symreconstruction} for $\{\chi^{\mathrm{D},\tau}_n\}_{n \in \Upsilon}$ and $\{\chi^{\mathrm{S},\tau}_n\}_{n \in \Upsilon}$ holds by Theorem~\ref{th:reconstruction}.
\end{proof}

\subsection{Mapping estimation}
In practice, the~mappings $\gamma_n$ in Definitions \ref{def:norm_system} and \ref{def:unnorm_system} have to be estimated.
The Thirion's demons algorithm~\cite{thirion1998image} is a mapping estimation scheme inspired by a diffusion process in which the targeted mapping is represented by a displacement field $\gamma$.
This method alternates between solving the flow equations and regularization. To~give a theoretical justification for this method,~\cite{vercauteren2009diffeomorphic} revisited it as the following optimization problem:
\begin{equation}
\label{eq:demons_opti}
\underset{\gamma,c \in \F(\R^2)}{\mathrm{minimize}} \quad \Vert \Lambda - \Omega \circ c \Vert_2^2 + \frac{1}{\sigma_x^2} \Vert \gamma - c \Vert_2^2 + \mathcal{R}(\gamma),
\end{equation}
where $\Lambda$ and $\Omega$ are considered as functions from $\R^2$ to $\R$, $\F(\R^2)$ denotes the set of mappings on $\R^2$, $c$ is an intermediate mapping to allow spatial uncertainties, $\sigma_x$ is a parameter controlling the spatial uncertainty on the mappings, and $\mathcal{R}(\gamma)$ is a regularization term. 

The additive demons algorithm proposed in~\cite{cachier2003iconic} consists of estimating the minimum $s$ of Equation~\ref{eq:demons_opti} using an intermediate displacement field $u$ such that $c = \gamma + u$ and~alternate
\begin{align}
& \displaystyle \underset{u \in \F(\R^2)}{\mathrm{minimize}}
\quad \Vert \Lambda - \Omega \circ (\gamma + u) \Vert_2^2 + \frac{1}{\sigma_x^2}\Vert u \Vert_2^2 := E(u), \label{eq:u_minimization} \\
& \displaystyle \underset{\gamma \in \F(\R^2)}{\mathrm{minimize}}
\quad \frac{1}{\sigma_x^2} \Vert \gamma - c \Vert_2^2 + \mathcal{R}(\gamma). \label{eq:s_minimization}
\end{align}
The displacement field $u$ minimizing Equation~\ref{eq:u_minimization} is approximated by, for~every position $p \in R^N$,
\begin{equation}
\label{eq:demons_reg_update}
\displaystyle u(p) \approx \frac{\Lambda(p) - \Omega \circ \gamma(p)}{ \displaystyle \Vert \nabla (\Omega \circ \gamma)(p) \Vert_2^2 - \frac{\sigma_i^2}{\sigma_x^2} \vert \Lambda(p) - \Omega \circ \gamma(p) \vert^2 } \nabla \Omega^\top,
\end{equation}
where $\nabla$ denotes the gradient of functions.
This implies that $\sigma_x$ controls the maximum step length: $\Vert u(p) \Vert_2 \leq \sigma_x/2$ for every $p \in \R^2$.
As for the minimization of Equation~\ref{eq:s_minimization} with a regularization term $\mathcal{R}(\gamma) = \Vert J_\gamma \Vert_2^2/\sigma_d^2$, it corresponds to a convolution with a Gaussian kernel $G_{\rm diff}(\sigma_d)$ of standard deviation $\sigma_d$. This regularization $\mathcal{R}(\gamma)$ has been modified by~\cite{cachier2003iconic} in order to add a fluid-like regularization $G_{\rm fluid}(\sigma_f)$ of standard deviation $\sigma_f$.
Moreover, to~ensure the mapping $\gamma$ to be diffeomorphic, an~alternative update of $c$ with an exponential field has been proposed in~\cite{vercauteren2009diffeomorphic}: $c = s \circ \exp(u)$. The~additive and diffeomorphic demons algorithms are summarized in Algotirhm~\ref{alg:demons}. 

\begin{algorithm}[!ht]
    \caption{Vercauteren's demons algorithm}
    \label{alg:demons}
    \DontPrintSemicolon
    \textbf{Input:} Sets $\Lambda$, $\Omega$. \\
    \textbf{Initialization:} Displacement field $\gamma^{[0]}$.\\
    $k = 0$~; \\
    \While  {$k \leq K$ and $\vert E(\gamma^{[k]})-E(\gamma^{[k-5]}) \vert / E(\gamma^{[0]}) > \epsilon$} {
    $k \leftarrow k+1$~; \\
        Update $u^{[k]}$ from $\gamma^{[k-1]}$ using Equation~\ref{eq:demons_reg_update}~; \\
        Fluid-like regularization: $\gamma^{[k]} = G_{\rm fluid}(\sigma_f) \star u^{[k]}$~; \\
        \nosemic Either additive update $c^{[k]} = \gamma^{[k-1]} + u^{[k]}$ \\
        \pushline\dosemic\nonl or diffeomorphic update $c^{[k]} = \gamma^{[k-1]} \circ \exp(u^{[k]})$ ~; \\
    	\popline Diffusion-like regularization: $\gamma^{[k]} = G_{\rm diff}(\sigma_d) \star c^{[k]}$~;
    } 
        \textbf{Output:} Displacement field $\gamma^{[k]}$ minimizing $E(\gamma^{[k]})$ over $k$.
    
\end{algorithm}

For large deformations, a multiresolution scheme is necessary: the demons is performed iteratively from the lowest to the highest resolution. This scheme is summarized in Algotirhm~\ref{alg:multiresolution}.

\begin{algorithm}[!ht]
    \caption{Multiresolution demons algorithm}
    \label{alg:multiresolution}
    \DontPrintSemicolon
    \textbf{Input:} Sets $\Lambda$, $\Omega$. \\
    \textbf{Initialization:} Displacement field $\gamma^{[0]} = \mathbf{0}$.\\
    \For  {$k=N_{\rm level}-1,\ldots,0$} {
    	 $\Lambda^{[k]}, \Omega^{[k]}, \gamma^{[k]}$ $\leftarrow$ Downsample $\Lambda$, $\Omega$ and $\gamma^{[k-1]}/2^{k}$ by $2^{k}$ \\
    	$\gamma^* \leftarrow \textrm{Demons}(\Lambda^{[k]},\Omega^{[k]})$ initialized with $\gamma^{[k]}$   $\quad$ (Algotirhm~\ref{alg:demons})\\
    	$\gamma^{[k]}$ $\leftarrow$ Upsample $2^{k} \gamma^*$ by $2^{k}$
    	}
        \textbf{Output:} Displacement field $\gamma^{[N_{\rm level}]}$ \\
\end{algorithm}

\section{Numerical experiments} 
\label{sec:exp}

In this section, we compare the Thirion's demons algorithm (\textsc{MATLAB} built-in function \emph{imregdemons}) and the additive and diffeomorphic Vercauteren's demons algorithms (\textsc{MATLAB} toolbox available at \url{https://www.mathworks.com/matlabcentral/fileexchange/39194-diffeomorphic-log-demons-image-registration}) for the computation of estimates $\widetilde{\gamma}_n$ of the mappings $\gamma_n$. We consider the toy image and the sets $\Omega_n$ obtained by either Watershed or Voronoi partitioning presented in Figure~\ref{fig:images}.

\subsection{Mapping estimation set-up}

For the three demons algorithms, the~diffusion-like parameter $\sigma_d$ and number of resolution levels $N_{\rm level}$ are selected by minimizing the quadratic risk $\Vert \Lambda - \Omega_n \circ \widetilde{\gamma}_n \Vert_2^2$ over the values $(\sigma_d,N_{level}) \in (0.3,0.31,\ldots,0.5)\times (n_P-1,n_P)$, with~$n_P$ being the highest integer such that $2^{n_P}$ is smaller than each dimension of the image.
For the Thirions's demons algorithm, the~numbers   iterations at each resolution level is set to $(2^4,\ldots,2^{N_{level}+1})$ from the highest to the lowest level. For~the Vercauteren's demons algorithm, the~maximum step length is set to $\sigma_x = 5$, the~fluid-like regularization is set to $\sigma_f=1$, the~error threshold in the stop criterion is set to $\epsilon=10^{-3}$ and the maximum number of iterations is set to $K=500$.

\subsection{Assessment measures}
The accuracy of an estimate $\widetilde{\gamma}_n$ is measured using the Root Mean Squared Error defined as
\begin{equation*}
\mathrm{RMSE}(\Lambda,\Omega_n,\widetilde{\gamma}_n) = \frac{1}{\sqrt{N_{\rm pixels}}} \Vert \Lambda - \Omega_n \circ \widetilde{\gamma}_n \Vert_2, 
\end{equation*}
where $N_{\rm pixels}$ is the number of pixels in the image $\Lambda$.
The reconstruction $\widetilde{f}$ of the image $f$, with values in $[0,1]$, obtained by Equation~\ref{eq:symreconstruction} is assessed using the Peak Signal-to-Noise Ratio (PSNR) defined as
\begin{equation*}
\mathrm{PSNR}(f,\widetilde{f}) =  - 10 \log_{10} \Vert \widetilde{f} - f \Vert_2^2.
\end{equation*}

\subsection{Mapping estimation assessment}

We assess the behavior of the different algorithms to map a partition to a generating function's Fourier support, which is usually a disk or a square. Table~\ref{tab:comp_diffeo_estim} reports the average RMSE of the estimated mappings $\widetilde{\gamma}_n$ from the sets $\Omega_n$ of the Watershed or Voronoi partition to the disk or square $\Lambda$ for the different demons algorithms. The~additive and Thirion's demons algorithms have similar performance {on average},
and both outperform the diffeomorphic demons algorithm due to the latter's additional constraint on diffeomorphicity. However, the~Thirion's demons algorithm suffers from more variability than the additive demons algorithm.
Furthermore, all demons algorithms achieve higher performance for the more shape-constrained Voronoi partition than for the Watershed partition  and~better for the disk than for the square due to its~irregularity.

\begin{table}[!h]
\centering
\caption{RMSE $\times 10^{-2}$ averaged over the sets $\Omega_n$ (with lowest and highest values) for the different demons algorithms, partitions $\{\Omega_n\}_{n\in\Upsilon}$ and sets $\Lambda$.}
\label{tab:comp_diffeo_estim}
\resizebox{\linewidth}{!}{
\begin{tabularx}{1.17\linewidth}{l llll}
  	\toprule
  \textbf{Demons} & \textbf{Watershed $\boldsymbol{\rightarrow}$ Disk} & \textbf{Watershed $\boldsymbol{\rightarrow}$ Square} & \textbf{Voronoi $\boldsymbol{\rightarrow}$ Disk} & \textbf{Voronoi $\boldsymbol{\rightarrow}$ Square}  \\
  \midrule
  Thirion's &  $4.37 \quad (0.82 - 18.43)$ & $4.59 \quad 
  (0.31 - 20.03)$  & $2.41 \quad (0.58 - 13.71)$ & $2.32 \quad (0.32 - 6.80)$ \\  
    Additive & $3.83 \quad (2.13 - 5.13)$ & $4.30 \quad (2.47 - 7.23)$  & $3.00 \quad (2.14 - 4.17)$ & $3.08 \quad (2.23 - 4.14)$ \\  
  Diffeomorphic  & $5.64 \quad (3.00 - 10.29)$ & $7.35 \quad (3.15 - 14.47)$ & $4.03 \quad (2.86 - 6.04)$ & $6.41 \quad (2.00 - 17.05)$ \\
  \bottomrule
\end{tabularx}
}
\end{table}

In addition, Table~\ref{tab:comp_diffeo_estim_time} reports the computtional time of the different demons algorithms for the Watershed and Voronoi partitions $\{\Omega_n\}_{n\in\Upsilon}$ and the disk and square sets $\Lambda$. The Thirion's demons is the fastest algorithm and the diffeomorphic demons is much more computationally costly than the two other demons.

\begin{table}[!h]
\centering
\caption{Computational times ($h$:$min$:$sec$) of the demons algorithms for the different partitions $\{\Omega_n\}_{n\in\Upsilon}$ and sets $\Lambda$.}
\label{tab:comp_diffeo_estim_time}
\resizebox{\linewidth}{!}{
\begin{tabularx}{1.16\linewidth}{lcccc}
  	\toprule
  \textbf{Demons} & \textbf{Watershed $\boldsymbol{\rightarrow}$ Disk} & \textbf{Watershed $\boldsymbol{\rightarrow}$ Square} & \textbf{Voronoi $\boldsymbol{\rightarrow}$ Disk} & \textbf{Voronoi $\boldsymbol{\rightarrow}$ Square}  \\
  \midrule
  Thirion's &  $00:02:44$ & $00:02:50$  & $00:02:48$ & $00:02:44$ \\  
  Additive & $00:12:50$ & $00:11:35$  & $00:14:51$ & $00:12:17$ \\  
  Diffeomorphic  & $01:43:22$ & $01:32:25$ & $01:27:53$ & $01:44:40$ \\
  \bottomrule
\end{tabularx}
}
\end{table}

To assess the invertibility, continuity, and diffeomorphicity of the mapping estimates, we explore the behavior of the resulting empirical wavelet coefficients expected to be concave and equal to one on most of each support $\Omega_n$. Figures  \ref{fig:normalized_wavelet_filters} and
\ref{fig:unnormalized_wavelet_filters} show the disk band-pass empirical wavelet coefficients $\widehat{\chi}_n^{\mathrm{D},\tau}$, where $\tau=0.2$, respectively, with and without normalization for the Watershed partition and the different demons algorithms. The~normalized coefficients are not concave for the additive and Thirion's demons algorithms due to the discontinuity of the normalization coefficient, while they are concave but very sparse for the diffeomorphic demons algorithm. The~unnormalized wavelet coefficients are more satisfactory as they are continuous, concave and mostly equal to one for all sets $\Omega_n$. However, the~coefficients obtained using the diffeomorphic demons algorithm are more spread out due to the additional constraint of differentiability. As~for the Thirion's demons algorithm, it can miss an important part of a filter (see the top right corner of Figure~\ref{fig:unnormalized_wavelet_filters}).

\begin{figure}[!h] 
\centering
\subfloat[Thirion's demons]{\includegraphics[width=.8\textwidth]{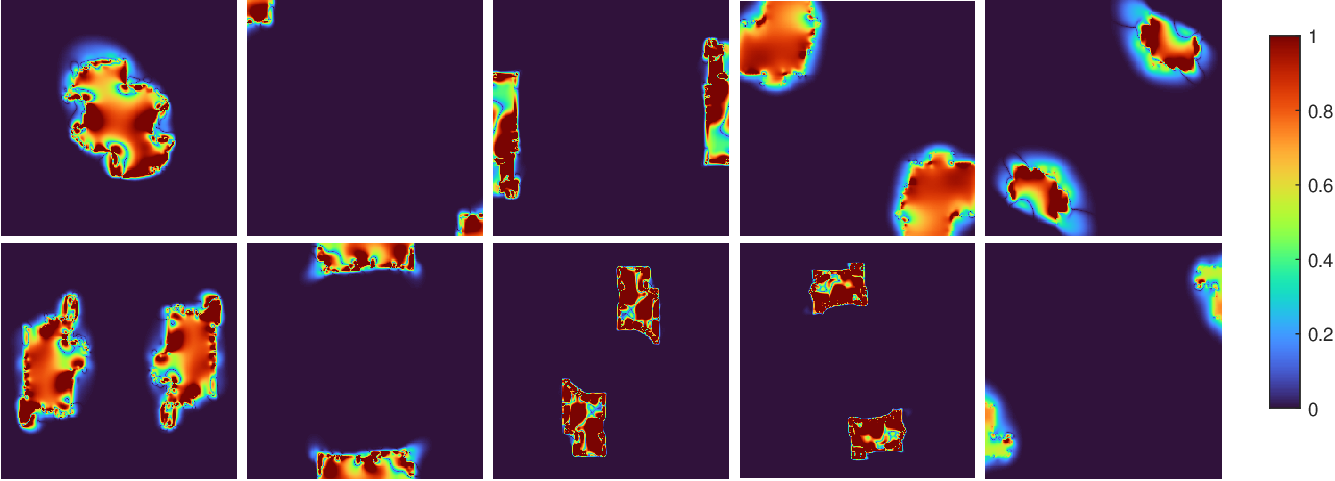} } \\

\subfloat[Additive demons]{\includegraphics[width=.8\textwidth]{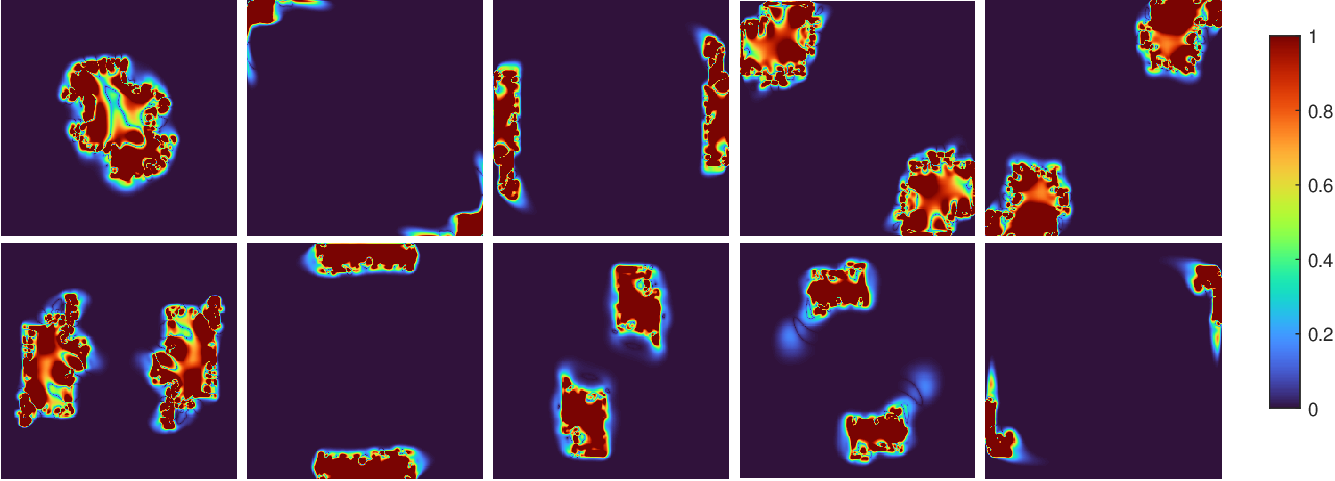} } \\

\subfloat[Diffeomorphic demons]{\includegraphics[width=.8\textwidth]{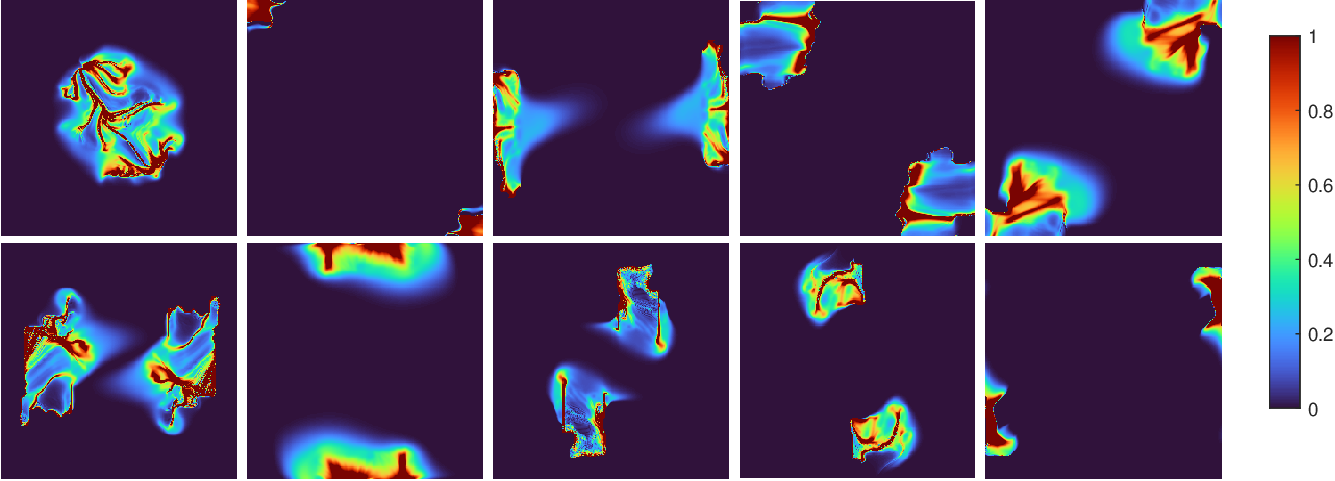} }
\caption{{\bf Normalized wavelet systems.} Normalized disk band-pass empirical wavelet filters $\widehat{\chi}_n^{\mathrm{D},\tau}$ with $\tau=0.2$ for the Watershed partition and the different mapping estimators.}
  \label{fig:normalized_wavelet_filters}
\end{figure}

\begin{figure}[!h] 
\centering
\subfloat[Thirion's demons]{\includegraphics[width=.8\textwidth]{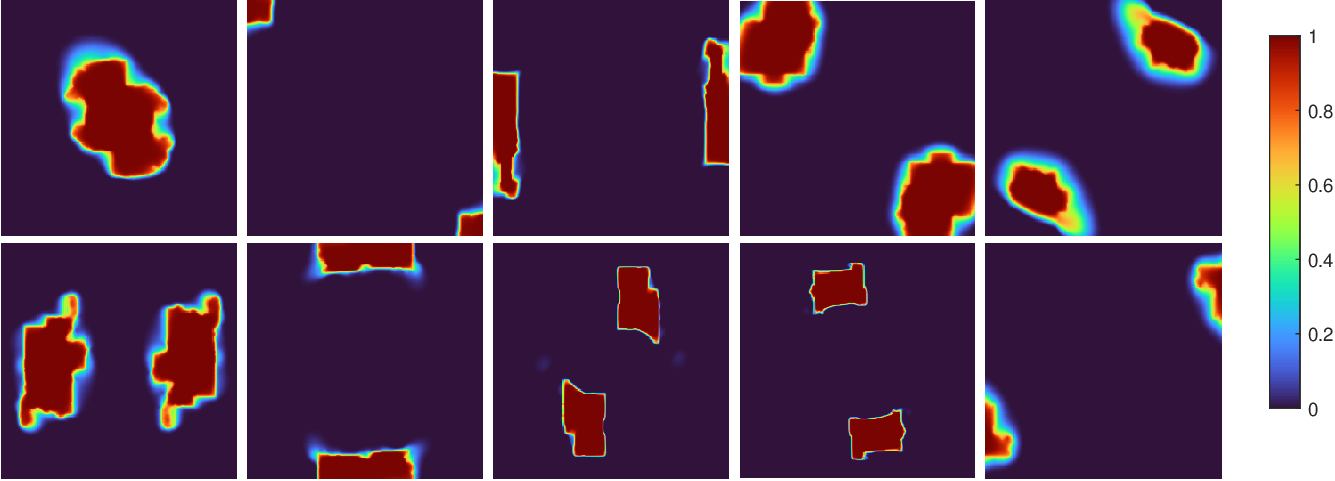}}  \\

\subfloat[Additive demons]{\includegraphics[width=.8\textwidth]{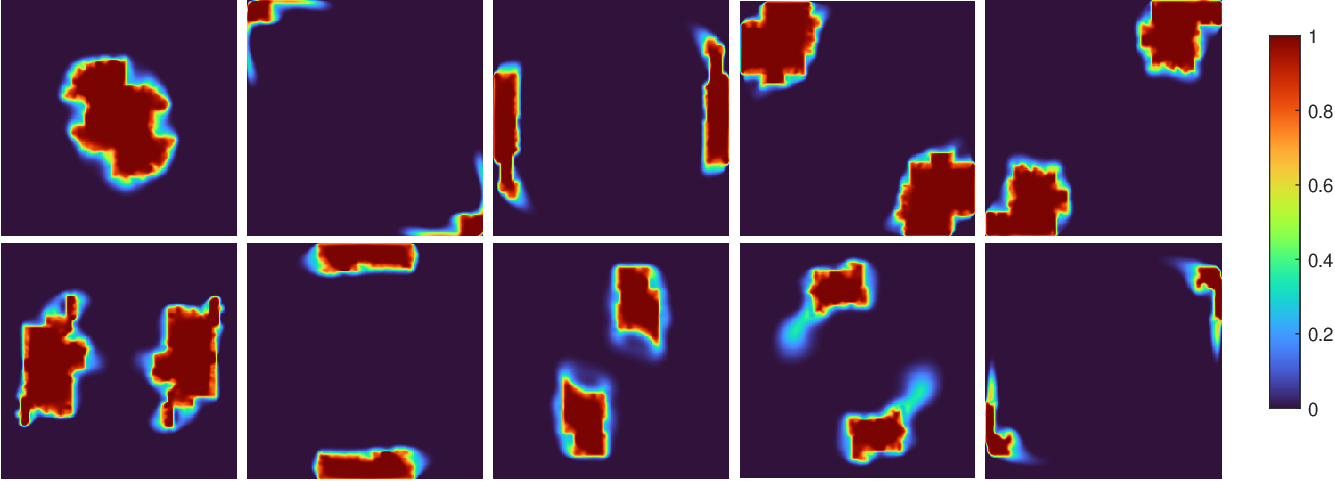}} \\

\subfloat[Diffeomorphic demons]{\includegraphics[width=.8\textwidth]{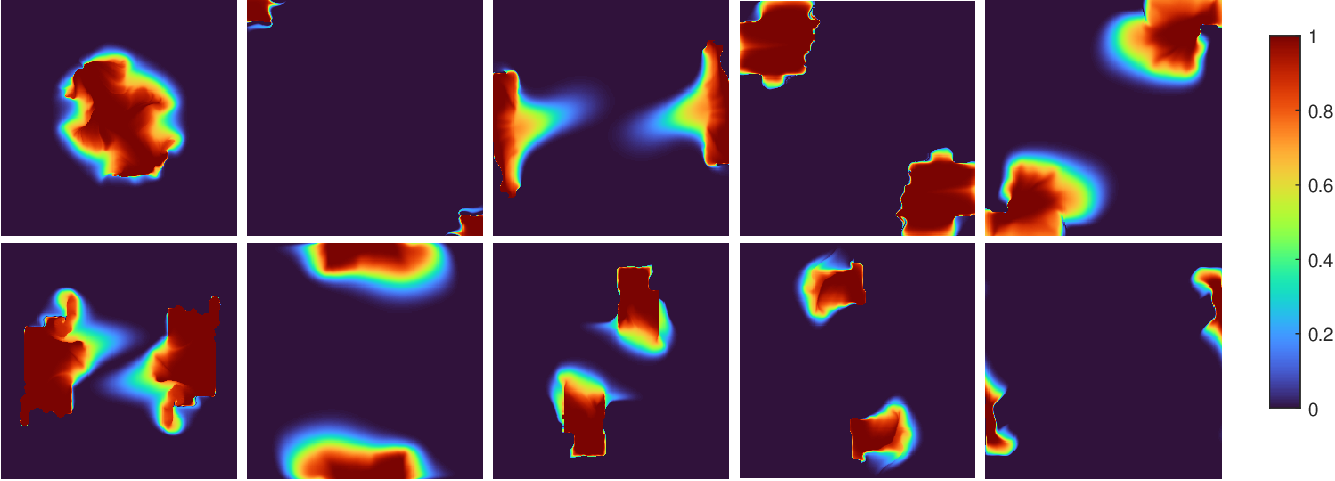}}
\caption{{\bf Unnormalized wavelet systems.} Unnormalized disk band-pass empirical wavelet filters $\widehat{\chi}_n^{\mathrm{D},\tau}$ with $\tau=0.2$ for the Watershed partition and the different mapping estimators.}
  \label{fig:unnormalized_wavelet_filters}
\end{figure}

In conclusion, the~additive demons algorithm provides accurate continuous mapping estimates to compute robust unnormalized empirical wavelet filters in a reasonable computation time. As~for the diffeomorphic demons algorithm, it provides inaccurate diffeomorphic mapping estimates, leading to unsuitable normalized~filters.

\subsection{Reconstruction assessment}

To assess the numerical behavior of the empirical wavelet transform, which is theoretically lossless for the disk and square band-pass filter according to Theorem~\ref{th:disk_frames}, we examine the quality of reconstruction. Table~\ref{tab:comp_normalized} reports the PSNR of the empirical wavelet reconstruction of the toy image for the disk and square band-pass filters with different transition widths $\tau$, the~Watershed and Voronoi partitions, and~the different demons algorithms. The~additive demons algorithm provides better, or~at least similar, reconstruction to the other demons algorithms. Moreover, the~normalization in the empirical wavelet filters has no impact on the reconstruction performance for the different demons~algorithms.

Despite the accuracy of the mapping estimation obtained in Table~\ref{tab:comp_diffeo_estim}, the~related wavelet reconstruction can be corrupted when the estimated filter bank does not entirely cover the Fourier domain. 
Notably, the~transition width $\tau=0.1$ leads to lower signal-to-noise ratios than the transition width $\tau=0.2$. Indeed, in~case of an inaccurate estimate of a mapping, $\tau=0.2$ ensures a larger covering of the Fourier domain by the wavelet filters than $\tau=0.1$.
In contrast, the~transition width $\tau=0.3$ can lead to artifacts in a symmetric wavelet filter $\chi_n$ due to overlaps of paired wavelet filters $\widehat{\psi}_n$ and $\widehat{\psi}_{-n}$, as confirmed by lower signal-to-noise ratios for the disk band-pass transform using the Watershed partition and a diffeomorphic demons algorithm compared to the transition width $\tau=0.2$.

\begin{table}[!ht]
\centering
\caption{PSNR of the reconstructed images for the different demons algorithms, sets $\Lambda$, partitioning methods and transition widths $\tau$ for the normalized and unnormalized empirical wavelet transforms.}
\label{tab:comp_normalized}
\newcolumntype{R}[1]{>{\let\newline\\\arraybackslash\hspace{0pt}}m{#1}}
\newcolumntype{C}[1]{>{\centering\let\newline\\\arraybackslash\hspace{0pt}}m{#1}}
\resizebox{\linewidth}{!}{
\begin{tabularx}{1.22\linewidth}{R{2.5cm} C{.7cm} C{1.4cm}C{1.4cm}C{1.4cm}C{1.4cm} C{1.4cm}C{1.4cm}C{1.4cm}C{1.4cm}}
  \toprule	
  \multirow{3}{*}{\textbf{Demons}} & \multirow{3}{*}{$\boldsymbol{\tau}$} &  \multicolumn{4}{c}{\textbf{Normalized}} & \multicolumn{4}{c}{\textbf{Unnormalized}} \\	
  & & \multicolumn{2}{c}{\textbf{Watershed}} & \multicolumn{2}{c}{\textbf{Voronoi}} & \multicolumn{2}{c}{\textbf{Watershed}} & \multicolumn{2}{c}{\textbf{Voronoi}} \\
  & & \textbf{Disk} & \textbf{Square} & \textbf{Disk} & \textbf{Square} & \textbf{Disk} & \textbf{Square} & \textbf{Disk} & \textbf{Square}\\
  \midrule
  Thirion's & \multirow{3}{*}{$0.1$} & $76.21$ & $\mathbf{79.67}$ & $82.61$ & $90.79$ & $76.21$ & $\mathbf{79.67}$ & $82.63$ & $90.79$  \\  
  Additive & & $\mathbf{96.08}$ & $79.23$ & $\mathbf{120.47}$ & $\mathbf{305.64}$ & $\mathbf{96.08}$ & $79.23$ & $\mathbf{120.47}$ & $\mathbf{306.03}$ \\
  Diffeomorphic & & $89.20$ & $62.05$ & $97.88$ & $286.93$  & $89.20$ & $62.05$ & $97.12$ & $269.60$ \\
  \midrule
  Thirion's & \multirow{3}{*}{$0.2$} & $78.22$ & $87.21$ & $85.34$ & $297.90$  & $78.22$ & $87.22$ & $85.34$ & $302.70$  \\ 
  Additive & & $\mathbf{112.78}$ & $\mathbf{305.65}$ & $\mathbf{305.47}$ & $\mathbf{305.69}$  & $\mathbf{112.78}$ & $\mathbf{305.88}$ & $\mathbf{306.37}$ & $\mathbf{306.26}$ \\
  Diffeomorphic & & $89.20$ & $124.85$ & $103.42$ & $303.76$  & $89.20$ & $120.77$ & $101.77$ & $305.40$ \\
  \midrule
  Thirion's & \multirow{3}{*}{$0.3$} & $85.02$ & $180.79$ & $89.32$ & $305.10$  & $84.82$ & $183.43$ & $89.32$ & $305.39$  \\ 
  Additive & & $\mathbf{124.41}$ & $\mathbf{305.16}$ & $\mathbf{305.78}$ & $\mathbf{305.61}$  & $\mathbf{124.41}$ & $\mathbf{305.67}$ & $\mathbf{306.00}$ & $\mathbf{306.16}$ \\
  Diffeomorphic & & $77.27$ & $117.87$ & $103.45$ & $303.19$  & $64.14$ & $128.81$ & $103.38$ & $305.38$ \\
  \bottomrule
\end{tabularx}
}
\end{table}

\subsection{Application to texture segmentation}

In this section, we illustrate the relevance of the proposed Empirical wavelet transform in the
segmentation of scanning tunneling microscope
images. Analyzing the structure and chemical properties of self-assembled monolayers of organic molecules is the core of many applications in nanoscience and nanotechnology~\cite{love2005self,gooding2003self}. These properties result in variations in the textures that are crucial to identify.
Recently, Empirical Curvelet Wavelet Transform (EWT-Curvelet), obtained by partitioning the Fourier domain in scales and angles~\cite{Gilles2013a}, has been shown to provide relevant texture features for scanning tunneling microscope images of self-assembled monolayers~\cite{guttentag2016hexagons,bui2020segmentation,guttentag2016surface}. 
In particular, the~EWT-Curvelet based segmentation outperforms the state-of-the-art methods for this task~\cite{bui2020segmentation}. Therefore, we propose a comparison of our approach with the latter.

We perform the texture segmentation according to the procedure of~\cite{huang2018review}, which consists of the following steps. 
First, we extract the cartoon and textural parts $u$ and $v$, respectively, of~an image $f$ by solving the cartoon–texture decomposition model~\cite{aujol2006combining}, defined as follows:
\begin{equation}
\displaystyle \underset{u \in F, v \in G}{\mathrm{minimize}} \quad \Vert \nabla u \Vert_1 + \mu \Vert v \Vert_G \quad \textrm{s.t.}  \;  f=u+v,
\end{equation}
where $F$ is the set of functions from $\R^2$ to $\R$ and $G$ is the set of oscillating functions. Following numerical analyses conducted by~\cite{gilles2012multiscale}, the~parameter controlling the oscillation degree of the textural part is set to $\mu=N_{\rm pixels}/2$. This problem is solved numerically using the algorithm described in~\cite{gilles2011bregman}. 
Next, we compute the local energy of the empirical wavelet transform $\E_{\chi}^v$ of the textural part $v$, defined as the local average of the wavelet coefficients $\E_{\chi}^v(.,n)$ for each frequency band $n$, i.e.,
\begin{equation}
\label{eq:local_energy}
\mathcal{T}^{\E}_n(x) = \frac{1}{N_W} \sum_{y\in W_x} \left\vert \E_{\chi}^v(y,n) \right\vert^2,
\end{equation}
where $W_x$ is a window centered at the pixel $x$ of size set to $N_W=19 \times 19$, following the work of~\cite{huang2018review}. Finally, the~segmentation is obtained by performing a pixel-wise k-means clustering with the cityblock distance on the local energy \eqref{eq:local_energy}, for~preselected numbers of clusters $k$.

Figure~\ref{fig:STMpartition} shows partitions of the Fourier spectrum of a scanning tunneling microscope image obtained by performing the Curvelet, Watershed and Voronoi methods with a scale-space step-size $s_0=0.2$ on the logarithm of the Fourier spectrum. The~Curvelet method results in a large number of sets $\Omega_n$, namely $393$, while the Watershed and Voronoi methods extract as many sets $\Omega_n$ as detected harmonic modes, $17$ in this~case.

\begin{figure}[!h] 
\centering
\includegraphics[width=.3\textwidth]{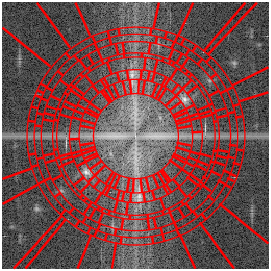} \includegraphics[width=.3\textwidth]{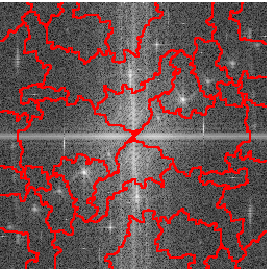} 
\includegraphics[width=.3\textwidth]{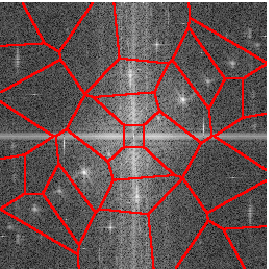} 
\caption{{\bf Examples of Fourier partitions.} Fourier spectrum of a scanning tunnelling microscope image overlapped with partitions obtained by the Curvelet (left), Watershed (middle) and Voronoi (right) methods.}\label{fig:STMpartition}
\end{figure}

Figure~\ref{fig:STMseg} shows scanning tunneling microscope images, their corresponding textural parts, and~their segmentation for the EWT-Curvelet and the proposed disk and square band-pass empirical wavelet transforms with $\tau=0.2$ for both Voronoi and Watershed partitioning. The~proposed empirical wavelet transforms provide accurate and comparable texture masks for the different images. In~all cases,  the~segmentation performance with the proposed approach is better than with the EWT-Curvelet, as~can be clearly seen in the last two rows of Figure~\ref{fig:STMseg}.

\begin{figure}[!ht] 
\centering
\includegraphics[width=1\textwidth]{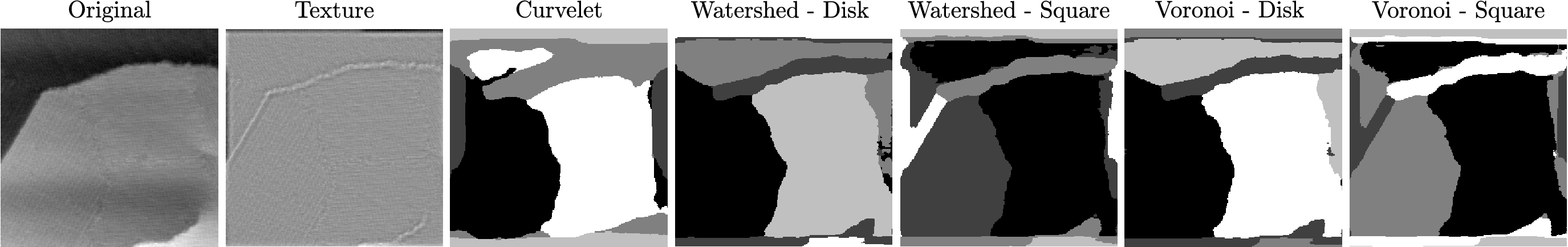}  \\
\includegraphics[width=1\textwidth]{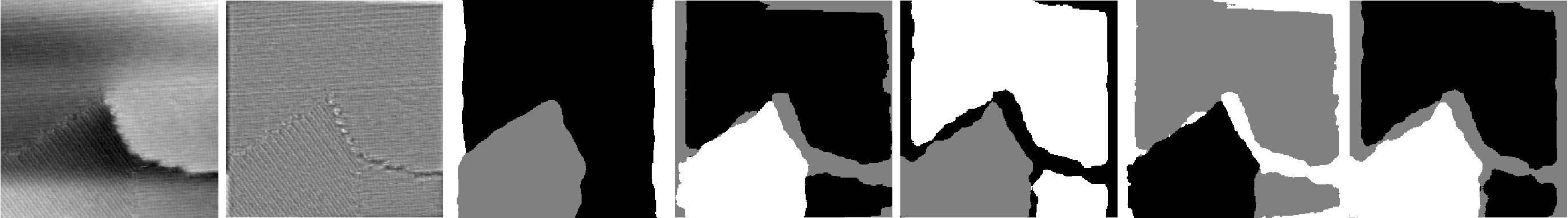}  \\
\includegraphics[width=1\textwidth]{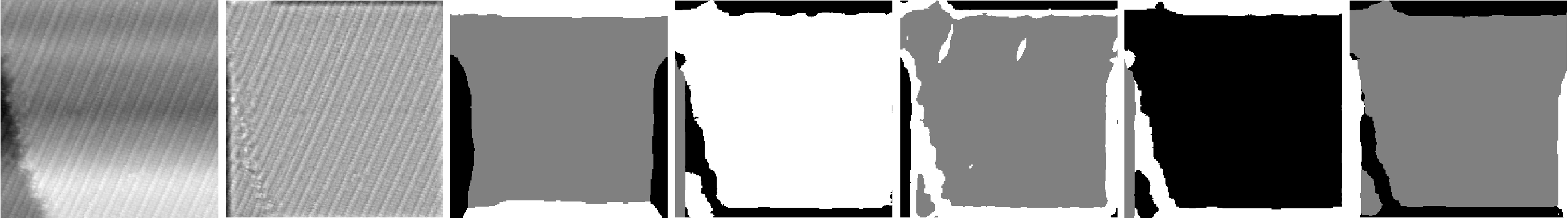}  \\
\includegraphics[width=1\textwidth]{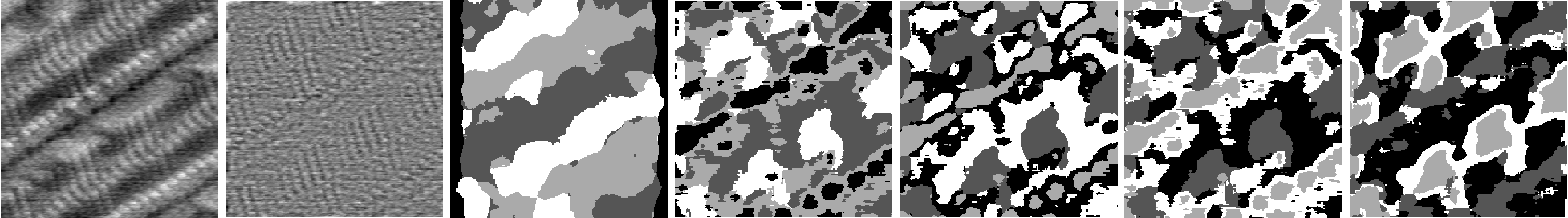}  \\
\caption{{\bf Texture segmentation.} Original scanning tunneling microscope images, their textural parts and the segmentation obtained for the different empirical wavelet transforms with numbers of clusters $k$ set to $5$, $3$, $3$ and $4$ from top to bottom, respectively. Raw scanning tunneling microscope images of cyanide on Au$\{111\}$ have been reproduced from \cite{guttentag2016hexagons,guttentag2016surface} with permission. Images copyright American Chemical Society.}\label{fig:STMseg}
\end{figure}

\section{Discussion}
\label{sec:discussion}

As shown by numerical experiments, the~additive demons algorithm provides accurate continuous mappings suitable for the empirical wavelet transform. The~results show that this approach has satisfactory performances for reconstruction and texture extraction. In~contrast, diffeomorphic mapping estimates are harder to obtain and take longer. This suggests that the use of normalization proposed in previous work~\cite{lucas2024multidimensional} should be avoided in~practice.

Moreover, although~the proposed algorithm applies to any homogeneous or separable wavelet kernel and any Fourier partition, the~selection of the wavelet kernel's support and the partitioning method is crucial to optimize the mapping estimation performance. The~study conducted in the present work shows that a Voronoi partition and a wavelet kernel defined on a disk result in a better performance~overall.

The tools introduced in this work are ready for application on real-world data. In~particular, they have been shown to provide relevant texture features for scanning tunneling microscope images for the different studied wavelet kernels. In~addition, the~number of features resulting from the Voronoi or Watershed partitioning methods is reasonable compared to the one obtained by the Curvelet transform used in~\cite{bui2020segmentation}. 

Future work will include a thorough study of the stop criterion of the additive demons algorithm. In addition, an~in-depth study of texture segmentation will also be carried out on a large labeled dataset, including a quantitative assessment of the impacts of the Fourier partitioning on the segmentation performance.

\section{Conclusions}
\label{sec:conclusions}

This work proposed an efficient algorithm for the empirical wavelet transform from any wavelet kernel based on demon registration. Among~the different meticulously compared demons algorithms, the~additive one has been shown to provide accurate numerical empirical wavelet systems with good properties of reconstruction, as~validated for suitable wavelet kernels. The~relevance of this approach for texture feature extraction has also been~highlighted.

\section{Acknowledgement}
This work was funded by the Air Force Office of Scientific Research, grant FA9550-21-1-0275.


\bibliographystyle{plain}
\bibliography{mybibfile}

\begin{thebibliography}{10}

\bibitem{aujol2006combining}
Jean-Fran{\c{c}}ois Aujol and Tony~F Chan.
\newblock Combining geometrical and textured information to perform image
  classification.
\newblock {\em Journal of Visual Communication and Image Representation},
  17(5):1004--1023, 2006.

\bibitem{bui2020segmentation}
Kevin Bui, Jacob Fauman, David Kes, Leticia Torres~Mandiola, Adina Ciomaga,
  Ricardo Salazar, Andrea~L Bertozzi, J{\'e}r{\^o}me Gilles, Dominic~P Goronzy,
  Andrew~I Guttentag, and P.S. Weiss.
\newblock Segmentation of scanning tunneling microscopy images using
  variational methods and empirical wavelets.
\newblock {\em Pattern Analysis and Applications}, 23:625--651, 2020.

\bibitem{cachier2003iconic}
Pascal Cachier, Eric Bardinet, Didier Dormont, Xavier Pennec, and Nicholas
  Ayache.
\newblock Iconic feature based nonrigid registration: the pasha algorithm.
\newblock {\em Computer vision and image understanding}, 89(2-3):272--298,
  2003.

\bibitem{deo2024ensemble}
Bhaswati~Singha Deo, Mayukha Pal, Prasanta~K Panigrahi, and Asima Pradhan.
\newblock An ensemble deep learning model with empirical wavelet transform
  feature for oral cancer histopathological image classification.
\newblock {\em International Journal of Data Science and Analytics}, pages
  1--18, February 2024.

\bibitem{gilles2012multiscale}
J{\'e}r{\^o}me Gilles.
\newblock Multiscale texture separation.
\newblock {\em Multiscale Modeling \& Simulation}, 10(4):1409--1427, 2012.

\bibitem{Gilles2013}
J\'er\^ome Gilles.
\newblock Empirical wavelet transform.
\newblock {\em IEEE Transactions on Signal Processing}, 61(16):3999--4010,
  August 2013.

\bibitem{gilles2022empirical}
J\'er\^ome Gilles.
\newblock Empirical voronoi wavelets.
\newblock {\em Constructive Mathematical Analysis}, 5(4):183--189, 2022.

\bibitem{Gilles2014a}
J\'er\^ome Gilles and Kathryn Heal.
\newblock A parameterless scale-space approach to find meaningful modes in
  histograms - application to image and spectrum segmentation.
\newblock {\em International Journal of Wavelets, Multiresolution and
  Information Processing}, 12(6):1450044, 2014.

\bibitem{gilles2011bregman}
J{\'e}r{\^o}me Gilles and Stanley Osher.
\newblock Bregman implementation of meyer’s g-norm for cartoon+ textures
  decomposition.
\newblock {\em UCLA Cam Report}, pages 11--73, 2011.

\bibitem{Gilles2013a}
J\'er\^ome Gilles, Giang Tran, and Stanley Osher.
\newblock {2D Empirical transforms. Wavelets, Ridgelets and Curvelets
  Revisited}.
\newblock {\em SIAM Journal on Imaging Sciences}, 7(1):157--186, January 2014.

\bibitem{gooding2003self}
J~Justin Gooding, Freya Mearns, Wenrong Yang, and Jingquan Liu.
\newblock Self-assembled monolayers into the 21st century: recent advances and
  applications.
\newblock {\em Electroanalysis: An International Journal Devoted to Fundamental
  and Practical Aspects of Electroanalysis}, 15(2):81--96, 2003.

\bibitem{guttentag2016hexagons}
Andrew~I Guttentag, Kristopher~K Barr, Tze-Bin Song, Kevin~V Bui, Jacob~N
  Fauman, Leticia~F Torres, David~D Kes, Adina Ciomaga, J{\'e}ro{\^o}me Gilles,
  Nichole~F Sullivan, et~al.
\newblock Hexagons to ribbons: Flipping cyanide on au $\{$111$\}$.
\newblock {\em Journal of the American Chemical Society}, 138(48):15580--15586,
  2016.

\bibitem{guttentag2016surface}
Andrew~I Guttentag, Tobias Wächter, Kristopher~K Barr, John~M Abendroth,
  Tze-Bin Song, Nichole~F Sullivan, Yang Yang, David~L Allara, Michael
  Zharnikov, and Paul~S Weiss.
\newblock Surface structure and electron transfer dynamics of the self-assembly
  of cyanide on au $\{$111$\}$.
\newblock {\em The Journal of Physical Chemistry C}, 120(47):26736--26746,
  2016.

\bibitem{huang1998empirical}
Norden~E Huang, Zheng Shen, Steven~R Long, Manli~C Wu, Hsing~H Shih, Quanan
  Zheng, Nai-Chyuan Yen, Chi~Chao Tung, and Henry~H Liu.
\newblock The empirical mode decomposition and the hilbert spectrum for
  nonlinear and non-stationary time series analysis.
\newblock {\em Proceedings of the Royal Society of London. Series A:
  mathematical, physical and engineering sciences}, 454(1971):903--995, 1998.

\bibitem{huang2018review}
Yuan Huang, Valentin De~Bortoli, Fugen Zhou, and J{\'e}r{\^o}me Gilles.
\newblock Review of wavelet-based unsupervised texture segmentation, advantage
  of adaptive wavelets.
\newblock {\em IET Image Processing}, 12(9):1626--1638, 2018.

\bibitem{Huang2019}
Yuan Huang, Fugen Zhou, and J\'er\^ome Gilles.
\newblock Empirical curvelet based fully convolutional network for supervised
  texture image segmentation.
\newblock {\em Neurocomputing}, 349:31--43, 2019.

\bibitem{hurat2020empirical}
Basile Hurat, Zariluz Alvarado, and J{\'e}r{\^o}me Gilles.
\newblock The empirical watershed wavelet.
\newblock {\em Journal of Imaging}, 6(12):140, 2020.

\bibitem{love2005self}
J~Christopher Love, Lara~A Estroff, Jennah~K Kriebel, Ralph~G Nuzzo, and
  George~M Whitesides.
\newblock Self-assembled monolayers of thiolates on metals as a form of
  nanotechnology.
\newblock {\em Chemical reviews}, 105(4):1103--1170, 2005.

\bibitem{lucas2024multidimensional}
Charles-G{\'e}rard Lucas and J{\'e}r{\^o}me Gilles.
\newblock Multidimensional empirical wavelet transform.
\newblock {\em arXiv preprint arXiv:2405.06188}, 2024.

\bibitem{maheshwari2016automated}
Shishir Maheshwari, Ram~Bilas Pachori, and U~Rajendra Acharya.
\newblock Automated diagnosis of glaucoma using empirical wavelet transform and
  correntropy features extracted from fundus images.
\newblock {\em IEEE journal of biomedical and health informatics},
  21(3):803--813, 2016.

\bibitem{prabhakar2017two}
T~V Nidhin~Prabhakar and P~Geetha.
\newblock Two-dimensional empirical wavelet transform based supervised
  hyperspectral image classification.
\newblock {\em ISPRS Journal of Photogrammetry and Remote Sensing}, 133:37--45,
  November 2017.

\bibitem{parashar2020automatic}
Deepak Parashar and Dheeraj~Kumar Agrawal.
\newblock Automatic classification of glaucoma stages using two-dimensional
  tensor empirical wavelet transform.
\newblock {\em IEEE Signal Processing Letters}, 28:66--70, 2020.

\bibitem{polinati2020multimodal}
Srinivasu Polinati and Ravindra Dhuli.
\newblock Multimodal medical image fusion using empirical wavelet decomposition
  and local energy maxima.
\newblock {\em Optik}, 205:163947, March 2020.

\bibitem{sundar2017multi}
K~Joseph~Abraham Sundar, Motepalli Jahnavi, and Konudula Lakshmisaritha.
\newblock Multi-sensor image fusion based on empirical wavelet transform.
\newblock In {\em 2017 International Conference on Electrical, Electronics,
  Communication, Computer, and Optimization Techniques (ICEECCOT)}, pages
  93--97. IEEE, 2017.

\bibitem{tabassum2020registration}
Iqra Tabassum, Li~Min, Ayesha Shafique, and Marva Saleem.
\newblock Registration of chest radiographs using a diffeomorphic demons based
  method.
\newblock In {\em 2020 IEEE International Conference on Progress in Informatics
  and Computing (PIC)}, pages 195--201. IEEE, 2020.

\bibitem{thirion1998image}
J-P Thirion.
\newblock Image matching as a diffusion process: an analogy with {M}axwell's
  demons.
\newblock {\em Medical image analysis}, 2(3):243--260, 1998.

\bibitem{vercauteren2009diffeomorphic}
Tom Vercauteren, Xavier Pennec, Aymeric Perchant, and Nicholas Ayache.
\newblock Diffeomorphic demons: Efficient non-parametric image registration.
\newblock {\em NeuroImage}, 45(1):S61--S72, 2009.

\bibitem{wang2018adaptive}
Chang Wang, Qiongqiong Ren, Xin Qin, and Yi~Yu.
\newblock Adaptive diffeomorphic multiresolution demons and their application
  to same modality medical image registration with large deformation.
\newblock {\em International Journal of Biomedical Imaging}, 2018(1):7314612,
  2018.

\end{thebibliography}


%


\end{document}